\newtheorem{thm}{Theorem}
\theoremstyle{definition}
\newtheorem{defi}[thm]{Definition}
\newtheorem{rem}[thm]{Remark}
\newtheorem{ex}{Example}
\title{The Fourier estimation method with positive semi-definite estimators
}
\author{Jir\^o Akahori}
\author{Nien-Lin Liu}
\author{Maria Elvira Mancino}
\author{Yukie Yasuda}
\address[Jir\^o Akahori, Nien-Lin Liu and Yukie Yasuda]{
Department of Mathematical Sciences, Ritsumeikan University
1-1-1 Nojihigashi, Kusatsu, Shiga, 525-8577, Japan}
\address[Maria Elvira Mancino]{Department of Economics and Management,
University of Firenze, Via delle Pandette, 9, Firenze, Italy}
\date{October, 2014}
\numberwithin{equation}{section}
\begin{document}
\maketitle
\thispagestyle{empty}
\begin{abstract}
In this paper we present a slight modification of
the Fourier estimation method of the spot volatility
(matrix) process of a continuous It\^o semimartingale
where the estimators are always non-negative definite.
Since the estimators are factorized,
computational cost will be saved a lot.
\end{abstract}


\section{Introduction}
In this paper we present a slight modification of
the Fourier estimation method of the spot volatility
(matrix) process of an It\^o semimartingale.
The method is originally introduced by Paul Malliavin
and the third author in \cite{MM02,MM09}.
The main aim of the modification is to construct
an estimator of the matrix
which always stays non-negative definite.

A motivation of the present study is 
to make an implementation of 
the Fourier method easier when it is applied
to ``dynamic principal component analysis", an important
application of the spot volatility estimation
(see \cite{LM12,LN12}).
Due to the lack of symmetry of the matrices,
its estimated eigenvalues are sometimes non-positive,
or even worse, non-real.
This is not the case with those based on
finite differences (FD) of the integrated volatility
such as Ngo-Ogawa's method \cite{NO09}.
However, as the Fourier method has many advantages
over the FD ones, among which
robustness against non-synchronous observations
counts most,
the modification to be presented in this paper
would be important.

There is a by-product of the modification;
thanks to a symmetry imposed to have the non-negativity
our estimator is factorized,
which may save computational cost a lot.

\

The present paper is organized as follows.
We will firstly introduce a generic form of
Fourier type estimators (Definition \ref{Generic}),
and discuss how it works as a recall (section \ref{Heurietic}).
After remarking that the classical one is obtained by
a choice of the ``fiber" (Remark \ref{classical}),
we will
introduce a class of such estimators (section \ref{PSD}),
each of which is labeled by a positive definite function.
As a main result, we will prove its positive semi-definiteness
(Theorem \ref{PFE}).
In addition, we give a remark (Remark \ref{reduction}) that
with an action of finite group, some of the newly introduced
positive semi-definite estimators reduce to
the classical one.

In section \ref{Parametrize},
we will give
a factorized representation of the estimator
(Definition \ref{factorize})
which is parameterized by
a measure by way of Bochner's correspondence.
The use of the expression may reduce the computation
cost, as will be exemplified by simple experiments
presented in section \ref{experiments}.
Section \ref{Measure} gives an important remark
that as a sequence of estimators,
the parameterization measures should be
a delta-approximating kernel.
Three examples of the kernels are given
(Examples \ref{Cauchy}--\ref{doubleF}), two of which are used
in the simple experiments presented in section \ref{experiments}.

In the present paper, we will not study limit theorems;
consistency nor central limit theorem in detail.
More detailed studies in these respects will appear in another paper.

\subsection*{Acknowledgment}
This work was partially supported by JSPS KAKENHI Grant Numbers
25780213, 
23330109, 24340022, 23654056 and 25285102.

\section{The Fourier Method Revisited}
\subsection{Generic Fourier Estimator}
Let $ X = (X^1,\cdots, X^d) $ be a $ d $-dimensional
continuous semimartingale.
Suppose that its quadratic variation (matrix) process
is absolutely continuous in $ t $ almost surely.
In this paper, we are interested in a statistical estimation of
the so-called {\em spot volatility} process;
\begin{equation*}
\frac{d[ X^j, X^{j'} ]_t}{dt} (\omega) =: V^{jj'}_t (\omega),
\quad 0 \leq t \leq 1, \ 1\leq j , j' \leq d,
\end{equation*}
as a function in $ t $,
especially when $ d \geq 2 $,
for a given observations of $ X^{j}$
on a partition $\Delta^j : 0 = t^j_0 < \cdots < t^j_{N_j} = 1 $, $ j= 1, \cdots, d$.

Here and hereafter
we normalize the time interval to $ [0,1] $ for notational simplicity.

We start with a generic form of a {\em Fourier estimator}
with respect to this observation set,
to have a unified view.
\begin{defi}\label{Generic}
Let $ \mathcal{K} $ be a finite subset of $ \mathbf{Z}$,
$ \mathcal{S} =\{ \mathcal{S} (k) \subset_{\mathrm{finite}} \mathbf{Z}^2 : k \in{\mathcal K}, 
(s,s') \in \mathcal{S}(k) \Rightarrow s+s'=k \} $
be a ``fiber" on $ \mathcal{K} $, and $ c $
be a complex function on $ \mathcal{K} $.
A Fourier estimator associated with $ (\mathcal{K}, \mathcal{S}, c) $
is a $ d \times d $ matrix defined entry-wisely by
\begin{equation}\label{GF}
\begin{split}
& (V_{(\mathcal{K}, \mathcal{S}, c)})_{j,j'} (t)\\
&= \sum_{l=1}^{N_j} \sum_{l'=1}^{N_{j'}}
\sum_{k \in \mathcal{K}} c(k) e^{2 \pi i k t}
\sum_{(s, s') \in \mathcal{S} (k) }   e^{-2 \pi i s t_l^j}
e^{-2 \pi i s' t_{l'}^{j'}}
\Delta X^j_l\Delta X^{j'}_{l'},\\
& \hspace{4cm} ( 1 \leq j,j' \leq d),
\end{split}
\end{equation}
where
\begin{equation*}
\Delta X^j_l := X^j_{t^j_l} - X^{j}_{t^j_{l-1}}.
\end{equation*}
\end{defi}

\begin{rem}\label{classical}
If we take $ \mathcal{K} = \{ 0,\pm 1, \cdots, \pm L \} $ for some positive
integer $ L $,
$ \mathcal{S}(k) = \{ (s,s') | s+s'=k, s=0, \pm 1, \cdots, \pm M \} $
for some positive integer $ M $, and
$ c (k) = ( 1  - \frac{|k|}{L+1} )/(2M+1) $, then
the estimator $ V_{(\mathcal{K}, \mathcal{S}, c)} $
coincides with the one introduced in
\cite{MM09}. In fact, with these parameters, we have
\begin{equation}\label{FEMM09}
\begin{split}
& (V)_{j,j'} (t) = \sum_{k=-L}^{L}
\left( 1 - \frac{|k|}{L+1} \right) e^{2\pi i k t}\\
& \cdot \frac{1}{2M+1} \sum_{s= -M}^M
\sum_{l=1}^{N_j} \sum_{l'=1}^{N_{j'}} e^{-2 \pi i s t^j_l}
e^{-2 \pi i (k-s) t^{j'}_{l'}}
\Delta X^{j}_l \Delta X^{j'}_{l'}.
\end{split}
\end{equation}
With the Dirichlet and the Fej\'er kernels;
\begin{equation*}
D_{M} (x) =
\sum_{|s|\leq M} e^{2 \pi i s x}
= \frac{\sin (2M+1) \pi x}{\sin \pi x},
\end{equation*}
{and}
\begin{equation*}
\begin{split}
K_{L} (x) &:= \frac{1}{L} \sum_{M=0}^{L-1} D_{M} (x)
=\sum_{|k| \leq L-1}
\left( 1 - \frac{|k|}{L} \right) e^{2\pi i k x}\\
&= \frac{1}{L} \left(
\frac{\sin(L\pi x) }{\sin (\pi x)}
\right)^2,
\end{split}
\end{equation*}
we can rewrite (\ref{FEMM09}) as
\begin{equation}\label{FEMM14}
\begin{split}
& (V)_{j,j'} (t)\\
&=\frac{1}{(2M+1)} \sum_{l,l'} K_{L+1} (t-t^j_l)
D_{M} (t^j_l-t^{j'}_{l'}) \Delta X^j_l\Delta X^{j'}_{l'}\\
&=\frac{1}{(2M+1)}
\sum_{l,l'} \left(
\frac{\sin\{(L+1) \pi (t-t^j_l)\}) }{\sin (\pi (t-t^j_l))}
\right)^2
\frac{\sin \{ (2M+1) \pi (t^j_l -t^{j'}_{l'})\}}{\sin \pi (t^j_l -t^{j'}_{l'})}
\\
& \hspace{4cm} \cdot \Delta X^j_l\Delta X^{j'}_{l'}.
\end{split}
\end{equation}
\end{rem}

\subsection{A Heuristic Derivation}\label{Heurietic}
Here we give a heuristic explanation of
the idea behind the Fourier method, which was
originally proposed in \cite{MM02, MM09}, and now
is extended to (\ref{GF}) to include a class 
of positive semi-definite estimators 
that will be introduced in the next section.

Looking at (\ref{GF}) or (\ref{FEMM09}) carefully,
we notice that though naively, we may suppose
\begin{equation}\label{heuristic}
\begin{split}
&(V_{(\mathcal{K}, \mathcal{S}, c)})_{j,j'} (t) \\
& \sim \sum_{k \in \mathcal{K}} c (k) e^{2 \pi i k t}
\sum_{ (s,s') \in S (k)} \left(\int_0^1 e^{ - 2 \pi i s u}
d X^j_u \right) \left(\int_0^1 e^{ - 2 \pi i s' u}
d X^{j'}_u \right) \\
&= \sum_{k \in K} c (k) |\mathcal{S}(k)| e^{2 \pi i k t}
\int_0^1 e^{ - 2 \pi i k u} d [X^j, X^{j'}]_u  \\
&+ \sum_{k \in K} c (k)  e^{2 \pi i k t}
\int_0^1 \int_0^u \sum_{ (s,s') \in \mathcal{S} (k)}
( e^{ - 2 \pi i s u} e^{- 2 \pi i s' v} dX^j_u d X^{j'}_v \\
& \hspace{6cm}
+ e^{ - 2 \pi i s' u} e^{- 2 \pi i s v} dX^{j'}_u d X^{j}_v ) \\
&=: I + II.
\end{split}
\end{equation}
The term $ I $ can be understood to be a weighted
partial sum of Fourier series of $ V^{jj'} $, which may
converge uniformly if the weight
$ c (k) | \mathcal{S} (k) | $ is properly chosen (in the case of
(\ref{FEMM09}), it is Fej\'er kernel).
The term $ II $ vanishes, roughly because, 
 $ \sum_{ (s,s') \in S (k)} 
( e^{ - 2 \pi i s u} e^{- 2 \pi i s' v} $ behaves like a Dirichlet kernel
$ D (u-v) $, which converges weakly to the delta measure. 


\section{Positive Semi-Definite Fourier Estimators}\label{PSD}

In financial applications, we are often interested in computing
the rank of the (spot) volatility matrix.
Since it is positive semi-definite
the rank is estimated by a hypothesis testing
on the number of positive eigenvalues.
The estimator (\ref{FEMM09}), or equivalently given as (\ref{FEMM14}),
however, sometimes
fails to be symmetric\footnote{This is seen from
the following simple observation:
$ \sum_{l=1}^2\sum_{l'=1}^2 a_{l,l'}
(x^{1}_l x^2_{l'} -x^{2}_l x^1_{l'})
=(a_{1,2} - a_{2,1}) (x_1^1x_2^2-x_1^2 x_2^1) $. }
since $ K_M (t-t_l) D_L (t_l - t_{l'}) $
is not symmetric in $ l,l' $,
and thus its eigenvalues are not always real numbers.
This causes
some trouble in estimating the rank of the volatility matrix.
Here we propose a class of Fourier estimators that will be proven
to be symmetric and positive semi-definite.

\begin{rem}
We just note that the positive semi-definiteness of
the Fourier estimator defined in \cite{MM09}
of the integrated volatility matrix
(the $ 0 $-th coefficient) is proved in \cite{MS11}.
\end{rem}

\subsection{Positive Fourier Estimators}

The main result of the present paper is the following
\begin{thm}\label{PFE}
Suppose that
$ \mathcal{K} = \{ 0, \pm 1, \cdots, \pm 2M \} $
for some positive integer $ M $,
$ c $ is a positive semi-definite function on $ \mathcal{K} $,
and
\begin{equation*}
\begin{split}
& \mathcal{S} (k) = \\
&
\begin{cases}
\{ (-M+k+v, M-v) : v=0,\cdots,2M-k \} & 0 \leq k \leq 2M \\
\{ (M+k-v, -M+v :v=0,\cdots,2M+k\} & -2M \leq k < 0.
\end{cases}
\end{split}
\end{equation*}
Then, $ V_{(\mathcal{K}, \mathcal{S}, c)} $ defined in (\ref{GF}) is positive semi-definite.
\end{thm}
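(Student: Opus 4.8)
The plan is to first strip away the apparent complexity of the fiber and recognize that, for the prescribed $\mathcal{S}$, each slice is simply the anti-diagonal of the square grid $\{-M,\dots,M\}^2$ at height $k$. Concretely, I would check directly that for every $k\in\mathcal{K}$,
\[
\mathcal{S}(k)=\{(s,s'): s+s'=k,\ -M\le s\le M,\ -M\le s'\le M\}.
\]
For $0\le k\le 2M$ the parametrization $s=-M+k+v,\ s'=M-v$ makes $s$ run through $\{-M+k,\dots,M\}$ while $s'=k-s$ stays in the same range, which is exactly the set of lattice points of the box summing to $k$; the case $k<0$ is the mirror image. This identification is the crucial structural step, because it allows me to merge the outer sum over $k\in\mathcal{K}$ with the inner sum over the fiber into a single unconstrained double sum over $(s,s')\in\{-M,\dots,M\}^2$, with $k=s+s'$.

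Next I would introduce the discrete Fourier coefficients
\[
\Phi^j_s:=\sum_{l=1}^{N_j}e^{-2\pi i s t^j_l}\,\Delta X^j_l,\qquad s\in\{-M,\dots,M\},
\]
so that, since the $l$- and $l'$-sums factor, the estimator collapses to the compact form
\[
(V_{(\mathcal{K},\mathcal{S},c)})_{j,j'}(t)=\sum_{s=-M}^{M}\sum_{s'=-M}^{M}c(s+s')\,e^{2\pi i(s+s')t}\,\Phi^j_s\,\Phi^{j'}_{s'}.
\]
Fixing a test vector $\xi=(\xi_1,\dots,\xi_d)\in\mathbf{C}^d$, I would then examine the Hermitian form $\sum_{j,j'}\overline{\xi_j}\,(V)_{j,j'}(t)\,\xi_{j'}$. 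Setting $w_s:=\sum_j\xi_j\Phi^j_s$ and using that $X$ is real-valued, hence $\overline{\Phi^j_s}=\Phi^j_{-s}$, the conjugated factor becomes $\sum_j\overline{\xi_j}\Phi^j_s=\overline{w_{-s}}$, giving
\[
\sum_{j,j'}\overline{\xi_j}\,(V)_{j,j'}(t)\,\xi_{j'}=\sum_{s,s'}c(s+s')\,e^{2\pi i(s+s')t}\,\overline{w_{-s}}\,w_{s'}.
\]

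The decisive final step is the sign flip that converts the argument $s+s'$ of $c$ into a difference: replacing $s$ by $-s$ in the first index and absorbing the phases into $z_s:=w_s\,e^{2\pi i s t}$, the form reduces to
\[
\sum_{j,j'}\overline{\xi_j}\,(V)_{j,j'}(t)\,\xi_{j'}=\sum_{s=-M}^{M}\sum_{s'=-M}^{M}c(s'-s)\,\overline{z_s}\,z_{s'},
\]
which is exactly the quadratic form attached to $c$ evaluated on the nodes $\{-M,\dots,M\}$ with weights $z_s$. Since the differences $s'-s$ sweep out precisely $\mathcal{K}=\{-2M,\dots,2M\}$, the hypothesis that $c$ is positive semi-definite on $\mathcal{K}$ makes this sum real and non-negative for every $\xi$, which simultaneously yields that $V_{(\mathcal{K},\mathcal{S},c)}(t)$ is Hermitian and positive semi-definite. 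The main obstacle is entirely conceptual rather than computational: the two insights that (i) the fiber is the anti-diagonal of a full box, so that the sum splits into independent $\Phi^j_s$ factors, and (ii) the combination of the reality relation $\overline{\Phi^j_s}=\Phi^j_{-s}$ with the index reversal turns the ``sum'' argument $c(s+s')$ into the ``difference'' argument $c(s'-s)$ needed to invoke positive-definiteness. Once these are in place, every remaining manipulation is routine relabeling.
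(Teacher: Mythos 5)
Your proposal is correct and takes essentially the same route as the paper: your identification of $\mathcal{S}(k)$ with the anti-diagonal of the box $\{-M,\dots,M\}^2$ is precisely the content of the paper's summation identity (\ref{lm001}), and your sign flip $s\mapsto -s$ with the factorized form $\sum_{s,s'}c(s'-s)\,\overline{z_s}\,z_{s'}$ mirrors the paper's change of variables $u'\mapsto -u'$ and its reduction to $\sum_{u,u'}c(u-u')f(u)\overline{f(u')}\geq 0$. The only cosmetic difference is that you make the use of the realness of $X$ explicit via $\overline{\Phi^j_s}=\Phi^j_{-s}$, which the paper leaves implicit in taking $\overline{f(u')}$.
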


\begin{proof}
Let $ a_j $, $ j=1,2,3 $ be arbitrary functions on $ \mathbf{Z} $.
From the definitions of $ (\mathcal{K},\mathcal{S}) $, we notice that
\begin{equation*}
\begin{split}
& \sum_{ k \in \mathcal{K} } \sum_{ (s,s') \in \mathcal{S}(k) } a_1 (k) a_2 (s) a_3 (s') \\
&= \sum_{ k =0 }^{2M} \sum_{ v=0  }^{2M-k} a_1 (k) a_2 (-M+k+v) a_3 (M-v) \\
&+ \sum_{ k = -2M }^{-1} \sum_{ v=0  }^{2M+k}
a_1 (k) a_2 (M+k-v) a_3 (-M+v) =: A+B
\end{split}
\end{equation*}
For the first term of the right-hand-side,
\begin{equation*}
\begin{split}
A &= \sum_{ k =0 }^{2M} \sum_{ u= k-M }^{M} a_1 (k) a_2 (k-u) a_3 (u) \\
&= \sum_{ u = -M }^{M} \sum_{k=0}^{u+M} a_1 (k) a_2 (k-u) a_3 (u) \\
&= \sum_{ u = -M }^{M} \sum_{u'=-u}^{M} a_1 (u+u') a_2 (u') a_3 (u),
\end{split}
\end{equation*}
where we set $ u = M-v $ in the first line,  changed
the order of the summations in the second line, and
put $ u'= k-u $. Similarly, we have
\begin{equation*}
\begin{split}
B &= \sum_{ k = -2M }^{-1} \sum_{ u= -M }^{M+k} a_1 (k) a_2 (k-u) a_3 (u) \\
&= \sum_{ u = -M }^{M}\sum_{k=u-M}^{-1} a_1 (k) a_2 (k-u) a_3 (u) \\
&= \sum_{ u = -M }^{M} \sum_{u'=-M}^{-u-1} a_1 (u+u') a_2 (u') a_3 (u).
\end{split}
\end{equation*}
Thus we see that
\begin{equation}\label{lm001}
\sum_{ k \in \mathcal{K} } \sum_{ (s,s') \in \mathcal{S}(k) } a_1 (k) a_2 (s) a_3 (s')
=\sum_{ u = -M }^{M} \sum_{u'=-M}^{M} a_1 (u+u') a_2 (u') a_3 (u).
\end{equation}

Applying (\ref{lm001}) when $ a_1 (k) = c (k) e^{2 \pi i k t} $,
$ a_2 (s) = e^{-2 \pi i s t^{j'}_{l'}} $ and $ a_3 (s')
= e^{-2 \pi i s' t^j_l} $, we obtain
\begin{equation}\label{PF}
\begin{split}
& (V_{(\mathcal{K}, \mathcal{S}, c)})_{j,j'} (t)\\
&= \sum_{l=1}^{N_j} \sum_{l'=1}^{N_{j'}}
\sum_{u=-M}^M \sum_{u'=-M}^M c(u-u') e^{2 \pi i u (t-t_l^j)}
e^{-2 \pi i u' (t-t_{l'}^{j'})} \Delta X^j_l
\Delta X^{j'}_{l'} \\
& \hspace{4cm} ( 1 \leq i,j \leq d).
\end{split}
\end{equation}
Here we used an obvious change of variables $ u' \mapsto -u' $.

Now the positive definiteness follows easily. In fact,
for $ x \in \mathbb{C}^d $, we have
\begin{equation*}
\begin{split}
& \sum_{j,j'} (V_{(\mathcal{K}, \mathcal{S}, c)})_{j,j'} (t)x_j \overline{x_{j'}} \\
&=  \sum_{u=-M}^M \sum_{u'=-M}^M c(u-u') \\
& \cdot \left(
\sum_{j=1}^d x_j \sum_{l=1}^{N_j} e^{2 \pi i u (t-t_l^j)}
\Delta X^j_l\right)
\left(
\sum_{j'=1}^d \overline{x_{j'}} \sum_{l=1}^{N_{j'}} e^{-2 \pi i u'
 (t-t_{l'}^{j'})} \Delta X^{j'}_{l'} \right) \\
&= \sum_{u=-M}^M \sum_{u'=-M}^M c(u-u') f (u) \overline{f(u')} \geq 0,
\end{split}
\end{equation*}
where we have put
\begin{equation*}
f (u) := \sum_{j=1}^d x_j \sum_{l=1}^{N_j}
e^{2 \pi i u(t- t_l^j)} \Delta X^{j}_{l}.
\end{equation*}
\end{proof}

\begin{rem}\label{reduction}
If we set in (\ref{PF})
$ N := N_j=N_{j'} = 2M+1 $, $ \Delta t^j_l \equiv 1/N $, and
$ c (k) = 1-\min(|k|, |N-k|)/M $,
the estimator (\ref{PF}) coincides with
(\ref{FEMM09}) with $ L= M $.
In fact,
writing $ t_l = l/N $ for $ l=1, \cdots, N $,
we notice that, for $ t = l_0/N $,
\begin{equation*}
\begin{split}
& (V_{(\mathcal{K}, \mathcal{S}, c)})_{j,j'} (t)\\
&= \frac{1}{N}
\sum_{l,l'} \Delta X^j_l \Delta X^{j'}_{l'}
\sum_{k=-M}^M \sum_{s=-M}^M c(k-s) e^{\frac{2 \pi i k(l_0-l)}{N}}
e^{- \frac{2 \pi i s(l_0-l')}{N} }, \\
\end{split}
\end{equation*}
and by the change of variables $ (k,s) \mapsto (k-s,-s) $,
which is an automorphism over $ \mathbb{Z}/N \mathbb{Z} $,
we have
\begin{equation*}
\begin{split}
&= \frac{1}{N}
\sum_{l,l'} \Delta X^j_l \Delta X^{j'}_{l'}
\sum_{k=-M}^M \sum_{s=-M}^M c(k) e^{\frac{2 \pi i k(l_0-l)}{N}}
e^{\frac{2 \pi i s(l-l')}{N} }.
\end{split}
\end{equation*}
\end{rem}

\subsection{Parameterization by measures}\label{Parametrize}
By Bochner's theorem, we know that for each positive semi-definite
function $ c $,
there exists a bounded measure $ \mu $ on $ \mathbf{R} $ such that
\begin{equation}\label{Boch}
c (x) = \int_\mathbf{R}
e^{2 \pi i y x} \,\mu(dy).
\end{equation}
Therefore, we may rewrite the positive Fourier estimator (\ref{PF})
using the measure $ \mu $
instead of the positive semi-definite function $ c $.
The expression
in terms of the measure $ \mu $ will be useful when implementing the
Fourier method in estimating a spot volatility matrix.  

\begin{defi}\label{factorize}
Let $ \mu $ be a bounded measure and
$ M $ be a positive integer.
We associate with $ (\mu, M) $
an estimator of the spot volatility matrix as:
\begin{equation}\label{PF2}
\begin{split}
& (V_{(\mu,M)})_{j,j'} (t)
= \int_\mathbf{R}
\left( \sum_{l=1}^{N_j}  D_M ( t-t_l^j+y) \Delta X^j_l \right)\\
& \hspace{4cm}
\cdot \left( \sum_{l'=1}^{N_{j'}} D_M (t-t_{l'}^{j'} + y)
\Delta X^{j'}_{l'}\right) \mu(dy),\\
& \hspace{6cm} ( 1 \leq j,j' \leq d).
\end{split}
\end{equation}
\end{defi}

\begin{rem}
Under the assumptions in Theorem \ref{PFE}
with the relation (\ref{Boch}),
we have that $ V_{(\mathcal{K}, \mathcal{S}, c)} (t)
= V_{(\mu, M)} (t) $ for all $ t \in [0,1] $.
In fact, we have
\begin{equation*}\label{PF3}
\begin{split}
& (V_{(\mathcal{K}, \mathcal{S}, c)})_{j,j'} (t) \\
&= \sum_{\substack{1 \leq l \leq N_j \\ 1 \leq l' \leq N_{j'}}}
\int_\mathbf{R} \sum_{ |k| \leq M  } \sum_{|s| \leq M}
e^{2 \pi i (t-t_l^{j}+y)k} e^{-2 \pi i(t-t_{l'}^{j'}+y) s} \mu(dy)
\Delta X^{j}_l \Delta X^{j'}_{l'}\\
&= \sum_{\substack{1 \leq l \leq N_{j} \\ 1 \leq l' \leq N_{j'}}}
\int_\mathbf{R} D_M (t-t^j_l+y) D_M
(t-t^{j'}_{l'}+y) \mu(dy) \Delta X^j_l \Delta X^{j'}_{l'}\\
&=(V_{(\mu,M)})_{j,j'} (t).
\end{split}
\end{equation*}
Note that $ V_{(\mu,M)} $ is easily seen to be real symmetric, and thus
so is $ V_{(\mathcal{K}, \mathcal{S}, c)} $.
Further, it is easier to see that $ V_{(\mu,M)} $ is positive semi-definite.
In fact, for arbitrary $ x = (x_1, \cdots, x_d) \in \mathbf{R}^d $,
\begin{equation*}
\begin{split}
& \sum_{j,j'} (V_{(\mu,M)})_{j,j'} (t) x_j x_{j'} \\
&= \sum_{j,j'}\int_\mathbf{R}
\sum_{\substack{1 \leq l \leq N_j \\ 1 \leq l' \leq N_{j'}}}
D_M (t-t^j_l+y)\Delta X^j_l x_{j}\\
&= \int_\mathbf{R}
\sum_{j=1}^d \sum_{1 \leq l \leq N_j} 
D_M (t-t^j_l+y)\Delta X^j_l x_{j}\\
& \hspace{3cm} \cdot \sum_{j'=1}^d \sum_{1 \leq l' \leq N_{j'}} 
D_M (t-t^{j'}_{l'}+y) \Delta X^{j'}_{l'}
x_{j'} \mu(dy) \\
&= \int_\mathbf{R}
\left(\sum_{j=1}^d  \sum_{1 \leq l \leq N_j} D_M (t-t^j_l+y) x_j \right)^2 \mu(dy) \geq 0.
\end{split}
\end{equation*}
\end{rem}

\subsection{Remarks on the choice of the measure}\label{Measure}
From the observation made in (\ref{heuristic}),
we may insist we choose a sequence of positive semi-definite functions
$ c_N $, where $ N := \max_j N_j $ for simplicity,
so as that
\begin{equation*}
c_N(k) \sim \frac{1}{|\mathcal{S}_N (k)|} C_N (k)
\end{equation*}
where the kernel
\begin{equation}\label{kernel0}
\sum_{k=-2M_N}^{2M_N} C_N (k)e^{2 \pi i k s}
\end{equation}
behaves like/better than Fej\'er one.

The first example is the Fej\'er sum case where
\begin{equation*}
C_N (k) =
1- \frac{|k|}{2M_N+1},
\end{equation*}
or equivalently
\begin{equation*}
c_N (k) = \frac{1}{2M_N +1}
\end{equation*}
and therefore
\begin{equation*}
\mu_N = \frac{1}{2M_N +1} \delta_0.
\end{equation*}
In this case, the convergence of 
$II$ in (\ref{heuristic}) may not be good, 
which might be easier to be seen from the expression of (\ref{PF2}). 

Note that the estimator is completely different from 
the original one (\ref{FEMM09}) since $ |\mathcal{S} (k)| 
= 2M -|k|+1 $ in the former while it is always $ 2M $, 
independent of $ k $, in the latter. 
The factor $ |\mathcal{S} (k)| $ contributes less 
to the consistency in the newly introduced 
positive semi-definite class of estimators.

Looking at the above primitive case, however, 
we notice that a proper choice for the measures 
would be implied by   
\begin{equation*}
(2 M_N +1)^{-1} \times \text{(delta approximating kernel)}.
\end{equation*}

Here we list possible choices.
\begin{ex}\label{Cauchy}
Let 
\begin{equation}\label{Poisson}
C_N (k) = \left(1- \frac{|k|}{2M_N+1}\right)
e^{-\gamma_N |k|},
\end{equation}
where, $ \gamma_N \to 0 $ as $ N \to \infty $.
In this case,
\begin{equation}\label{Cauchy2}
\mu_N (dy) = \frac{1}{2M_N +1} \frac{1}{\pi}
\frac{\gamma_N}{y^2 + \gamma_N^2} dy,
\end{equation}
a Cauchy kernel.
\end{ex}

\begin{ex}\label{Gaussian}
Let 
\begin{equation}\label{Gauss}
C_N (k) =  \left(1- \frac{|k|}{2M_N+1}\right)
e^{-\frac{2 \pi^2 k^2}{L_N}},
\end{equation}
where, $ L_N \to \infty $ as $ N \to \infty $.
In this case,
\begin{equation}\label{Gauss2}
\mu_N (dy) = \frac{1}{2 M_N +1} \sqrt{\frac{L_N}{2\pi}}  e^{-L_N y^2} dy,
\end{equation}
a Gaussian kernel.
\end{ex}

\begin{ex}\label{doubleF}
We let
\begin{equation}\label{F1}
C_N (k) =  \left(1- \frac{|k|}{2M_N+1}\right)^2.
\end{equation}
In this case, its corresponding measure is the Fej\'er kernel;
\begin{equation}\label{F2}
\mu_N (\{y\}) = \frac{1}{2M_N+1} \left( \frac{\sin (2M_N+1) \pi y}
{\sin \pi y}
\right)^2 = K_{2M_N+1} (y),
\end{equation}
$ y = \frac{k}{2M_N+1}, k= 0,1,\cdots, 2M_N $ if $ 2M_N +1 $
is a prime number.
This can be seen by the following relation:
\begin{equation*}
1 - \frac{|k|}{L}
= \sum_{t= 0}^{L-1} K_{L} (t) e^{-2 \pi i k t},
\end{equation*}
which is valid when $ L $ is a prime number and is implied by
\begin{equation*}
K_{L} (t) = \sum_{k=-(L-1)}^{L-1} \left(1 - \frac{|k|}{L}
\right) e^{2 \pi i k t}.
\end{equation*}
It is notable that in this case we need not
discretize the integral with respect to $ \mu_N $
since it is already discrete.
\end{ex}

In the use of a delta kernel,
one needs to choose properly the approximating parameters
of the kernel as well as $ M_N $; the delta approximating parameters
are $ \gamma_N $ in Example \ref{Cauchy} and
$ L_N $ in Example \ref{Gaussian}.
(The Fej\'er case of Example \ref{doubleF} is an exception).
Even with a consistency
result which only tells an asymptotic behavior,
one still needs to optimize the choice with some other criteria.
In the next section, we give some simulation results
to have a clear view of this issue.

\section{Experimental Results}\label{experiments}
In this section, we present some results
of simple experiments to exemplify how our method is
implemented;
\begin{itemize}
\item We applied our estimation method to
the daily data from \\
31/03/2008 to 26/09/2008 
of zero-rate implied by Japanese government bond prices with maturities
07/12 and 07/06, from 07/12/2008 to 07/06/2014.
\item Therefore, we set $ N = 150 $ ($ = N_j $ for all $ j $,
the observation dates are equally spaced)
and $ d= 12 $.
\item We set $ M = M_N = 15 $ for $ M $ in (\ref{PF2})
and $ M_N $ in (\ref{kernel0}).
\item The integral with respect to $ \mu $ is  also discretized;
we only use $ [-1/2, 1/2] $ instead of whole real line,
which is discretized to $ \{ -1/2 + k/(2M_N+1); k= 0, 1, \cdots, 2 M_N \} $.
\item We tested the Cauchy kernel estimator with (\ref{Poisson}) in 
Example \ref{Cauchy}
in Experiment 1 and 2 with different $ \gamma_N $,
the Gaussian kernel ones of Example \ref{Gaussian}
in Experiment 3 and 4 with different $ L_N $.
\item We used Octave ver. 3.2.4, 
and a Vaio/SONY, Windows 7 64bit OS laptop PC, with processor
Intel(R) Core(TM) i3-2310M CPU @2.10GHz 2.10GHz,
and RAM 4.00 GB.
\end{itemize}
All the figures are indicating the results of
``dynamical principal component analysis",
where the graphs from the top shows
the time evolution of the rate of
the biggest, the biggest + the second, and
the first three biggest eigenvalues, respectively.
Each experiment took about 3 minutes; plausibly fast.
We see the similarities between Figure \ref{w1} and 
Figure \ref{w2}, and between Figure \ref{w4} and Figure \ref{w5}. 
In these experiments, we should say that 
the accuracy is not fully appreciated, but 
we may say that the order of the delta kernel is important to 
have an accuracy. 

\begin{figure}
\caption{Experiment 1; $ \gamma_N = (2M_N+1)^{-1/2} \approx 0.1796 $}
\label{w1}
\includegraphics[width=12cm]{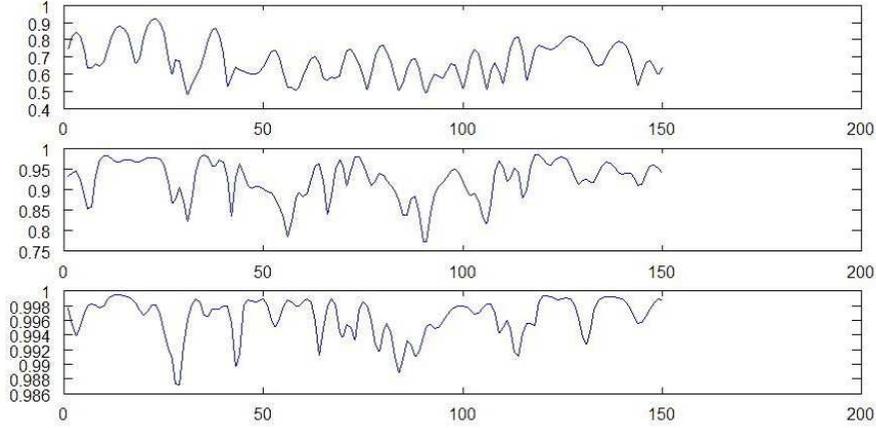}
\end{figure}
\begin{figure}
\caption{Experiment 2;$ \gamma_N = (2M_N+1)^{-1/4} \approx 0.4238 $}
\label{w2}
\includegraphics[width=12cm]{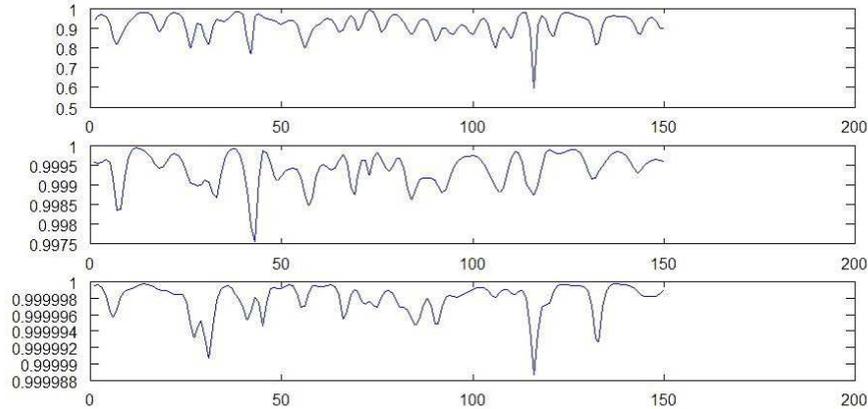}
\end{figure}

\if1
\begin{figure}
\caption{Experiment 3;$ \gamma_N = N^{-9/8} \approx 0.0036 $}
\label{w3}
\includegraphics[width=12cm]{Pattern7.eps}
\end{figure}
\fi


\begin{figure}
\caption{Experiment 3; $ L_N =2M_N+1 =31 $}
\label{w4}
\includegraphics[width=12cm]{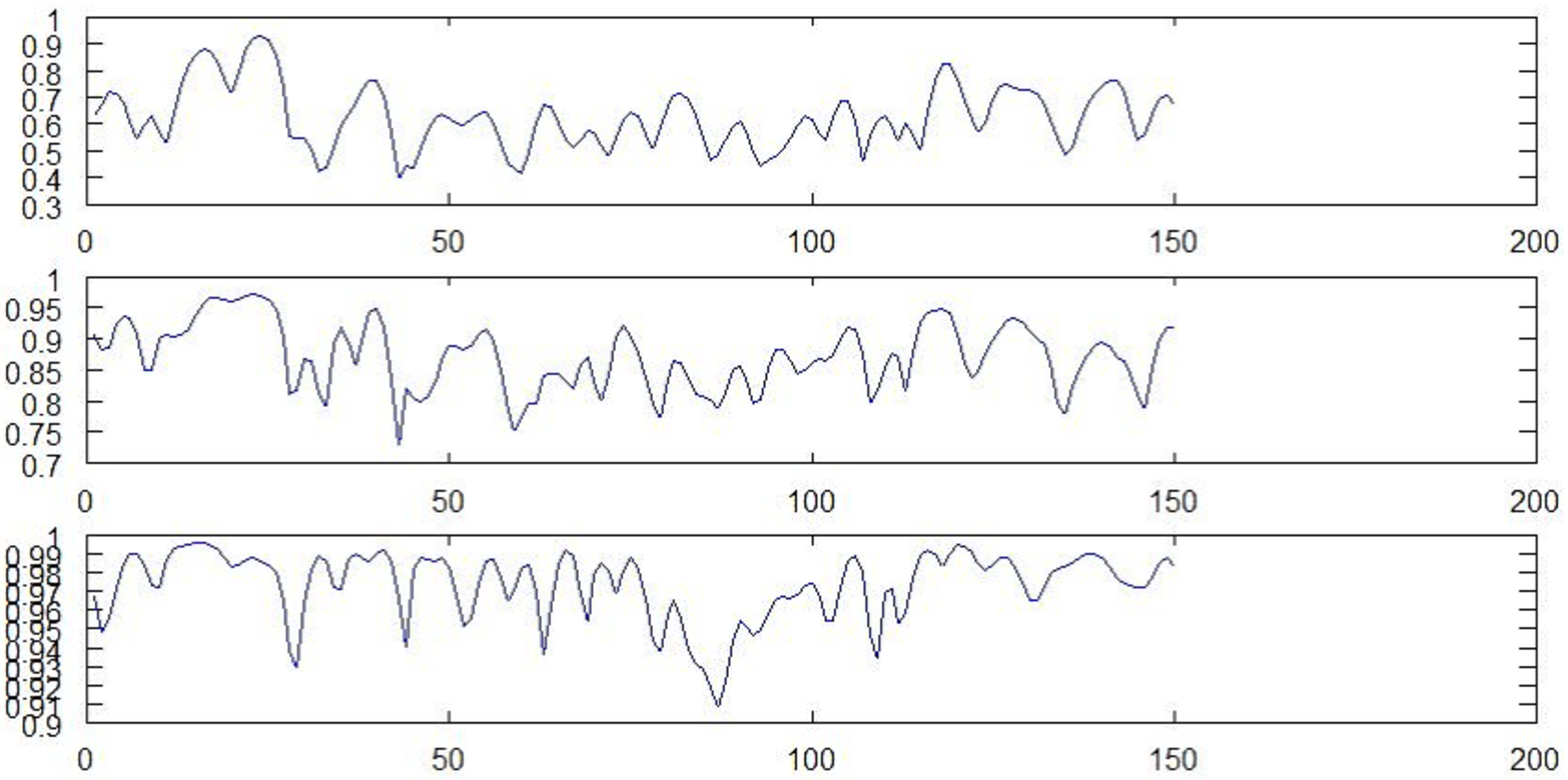}
\end{figure}

\begin{figure}
\caption{Experiment 4;$L_N = (M_N+1)^{1/4}  \approx 2.36 $}
\label{w5}
\includegraphics[width=12cm]{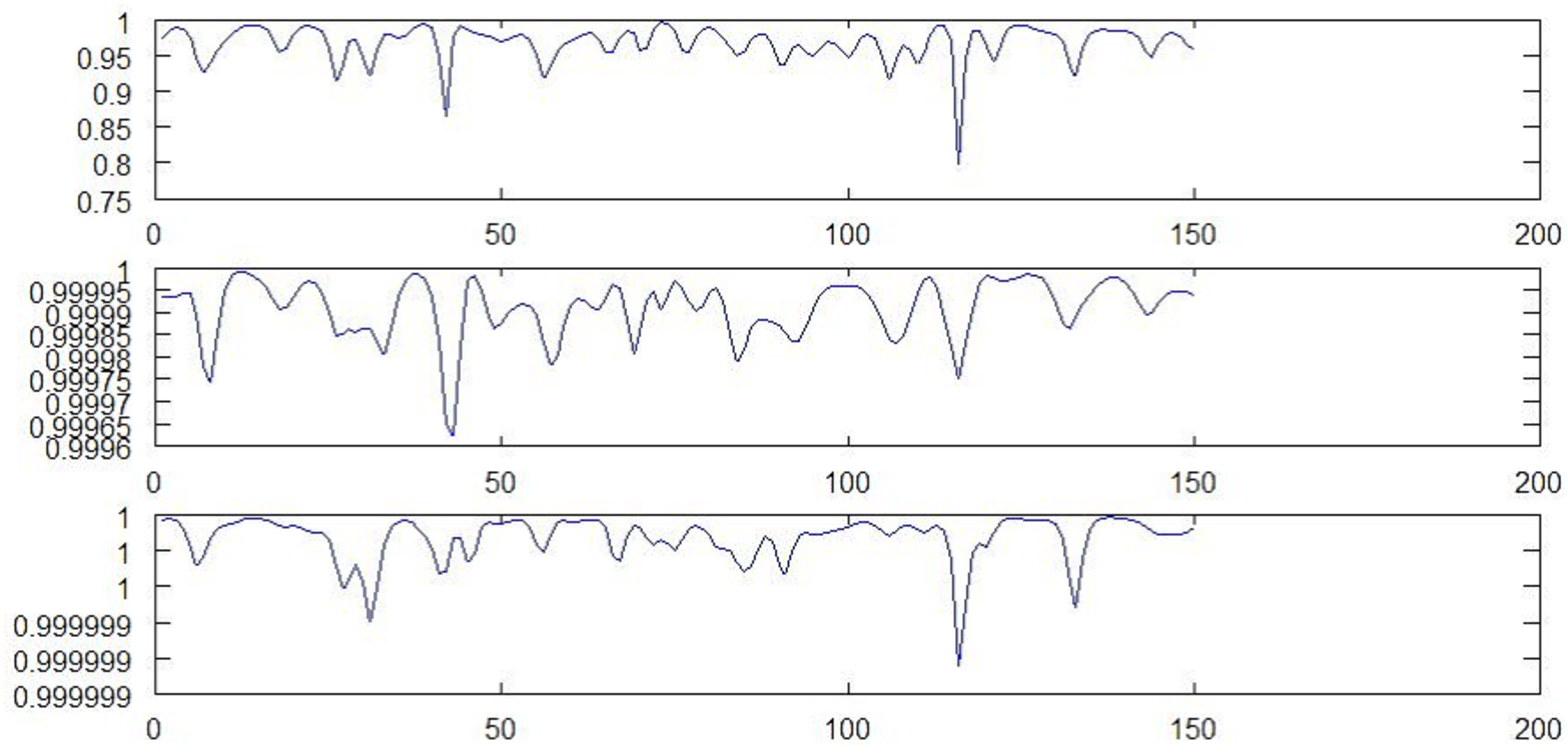}
\end{figure}


\if2
\begin{figure}
\caption{Experiment 4;$L_N =N^{33/16} \approx 3.0 \times10^4 $}
\label{w6}
\includegraphics[width=12cm]{Pattern8.eps}
\end{figure}
\fi

\end{document}